\newcommand{\R}{\mathbb{R}}
\newcommand{\np}{\vspace{7mm}}
\newcommand\figref{Figure~\ref}
\newcommand\secref{Section~\ref}
\newcommand\lref{Lemma~\ref}
\newcommand\assumref{Assumption~\ref}
\newcommand\eqqref{Equation~\eqref}
\newtheorem{theorem}{Theorem}
\newtheorem{assumption}{Assumption}
\newtheorem{lemma}{Lemma}
\newcommand{\RR}{\mathbb{R}} 
\newcommand{\pmu}{\mu}           
\newcommand{\prior}{m}           
\newcommand{\pvar}{\sigma^2}     
\newcommand{\nn}{\sigma_n^2}     
\newcommand{\kf}{\kappa}
\newcommand{\km}{\boldsymbol{\mathcal{K}}}
\newcommand{\kv}{\boldsymbol{\kappa}}
\newcommand{\DS}{\mathcal{DS}}
\newcommand{\HH}{\mathcal{H}}
\newcommand{\ys}{\boldsymbol{y}}
\newcommand{\xs}{\boldsymbol{x}}
\newcommand{\nx}{{n_x}}
\newcommand{\lp}{\left(}
\newcommand{\rp}{\right)}
\newcommand{\T}{^{\top}}
\newcommand{\hext}{\bar{h}}
\newtheorem{remark}{Remark}
\definecolor{forest}{HTML}{009933}
\definecolor{bluesea}{HTML}{6666ff}
\definecolor{orange}{HTML}{ff8c1a}
\definecolor{rederror}{HTML}{ff3333}
\definecolor{brown}{HTML}{A52A2A}
\title{\LARGE \bf
Data-Driven Analytic Differentiation via High Gain Observers and Gaussian Process Priors
}
\author{Biagio Trimarchi$^{1}$, Lorenzo Gentilini$^{1}$, Fabrizio Schiano$^{2}$, and Lorenzo Marconi$^{1}$%
\thanks{$^{1}$ Biagio Trimarchi, Lorenzo Gentilini, and Lorenzo Marconi are with the Center for Research on Complex Automated Systems (CASY), Department of Electrical, Electronic and Information Engineering (DEI), University of Bologna, Bologna, Italy
	{\tt\small e-mails: \{biagio.trimarchi2, lorenzo.gentilini6, lorenzo.marconi\}@unibo.it}}%
\thanks{$^{2}$Fabrizio Schiano is with Leonardo S.p.a., Leonardo Labs, Rome, Italy
	{\tt\small e-mail: fabrizio.schiano.ext@leonardo.com}}%
}
\begin{document}
\maketitle
\thispagestyle{empty}
\pagestyle{empty}

\begin{abstract}
The presented paper tackles the problem of modeling an unknown function, and its first $r-1$ derivatives, out of scattered and poor-quality data. The considered setting embraces a large number of use cases addressed in the literature and fits especially well in the context of control barrier functions, where high-order derivatives of the safe set are required to preserve the safety of the controlled system.
The approach builds on a cascade of high-gain observers and a set of Gaussian process regressors trained on the observers' data. The proposed structure allows for high robustness against measurement noise and flexibility with respect to the employed sampling law. 
Unlike previous approaches in the field, where a large number of samples are required to fit correctly the unknown function derivatives, here we suppose to have access only to a small window of samples, sliding in time.
The paper presents performance bounds on the attained regression error and numerical simulations showing how the proposed method outperforms previous approaches.
\\
	
\end{abstract}

\section{Introduction}

Autonomous systems have gained a lot of interest in the last decades, and we witness each year a big effort to increase their autonomy and capabilities. This effort was motivated by both an increase in computational resources and a reduction in Size, Weight, Power, and Cost (SWaPC) of such systems.
The consequent advancements in autonomous systems technologies unlocked the use of data-driven techniques on real systems. In the field of data-driven techniques, Gaussian Process (GP) regression \cite{rasmussen_gaussian_2006} is gaining popularity thanks to its non-parametric nature, the analytical tractability, and the existence of analytical bounds on the estimate error \cite{lederer2019uniform}. This property makes them particularly suitable for safety-critical applications, where uncertainty and noisy information could lead to a critical failure.
In the same context, another impactful advancement of the last years is the so-called \emph{control barrier functions}~\cite{ames2019control}, which are able to act as a filter for the control input of an autonomous system to guarantee that the safety requirements are always satisfied.~\cite{ames2019control}. Applications of control barrier functions can be seen in various contexts such as: quad-copters teleoperation \cite{xu2018safe}, multi-robot systems~\cite{wang2017safety}, and adaptive cruise control \cite{taylor2020adaptive}.
Recently, researchers tried to merge together GPs and CBFs giving birth to a new learning control paradigm \cite{khan_gaussian_2022}. Such a solution succeeds in all those cases when an analytic formulation of the safe-set is a priori not known, e.g. in the case of exploration of an unknown environment.
The main drawback of this last approach is that an overestimate of the barrier function could compromise the safety of the system. Moreover, the barrier condition relies on the knowledge of the barrier functions derivatives, that, in this learning scenario, is difficult to retrieve. As a matter of fact, GPs regressors suffer from loss of information during differentiation ~\cite{holsclaw2013gaussian} which makes them not suitable for such an application. 
On the other hand, an accurate estimate of time derivative can be generated using High Gain Observers (HGOs), which provide practical convergence even in the case of model uncertainty for high enough gains~\cite{tornambe1992high}.

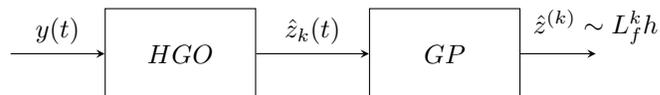
\begin{figure}
    \centering
\begin{tikzpicture}
\node [draw,
    fill=white,
    minimum width=2cm,
    minimum height=1.2cm,
    right=1cm 
]  (controller) {$HGO$};
 
\node [draw,
    fill=white, 
    minimum width=2cm, 
    minimum height=1.2cm,
    right=1.5cm of controller
] (system) {$GP$};
 

\draw[-stealth] (-0.25,0) -- (controller.west) 
    node[midway,above]{$y(t)$};
 
\draw[-stealth] (controller.east) -- (system.west)
    node[midway,above]{$\hat{z}_k(t)$};

\draw[-stealth] (system.east) -- ++ (1.0,0) 
    node[midway](output){}node[right,above]{$\hat{z}^{(k)} \sim L_f^k h$};
    
\end{tikzpicture}
\caption{Structure of the proposed approach: the high gain observer generates the data needed to fit the Gaussian process.}
\label{fig:approach}
\end{figure}

Motivated by these works, in this paper we take a step back from control barrier functions and focus on proposing a novel approach to estimate the derivative of an unknown function of which we have scarce measurements. We estimate the derivative of this function by combining Gaussian processes and high gain observers  \cite{tornambe1992high} as depicted in \figref{fig:approach}.
The idea of combining of HGOs and GPs is not novel since it was already proposed in \cite{buisson2021joint}, however, our overarching goal is different. In \cite{buisson2021joint} the system dynamics are predicted out of collected data. Instead, in this paper, we focus mainly on reproducing a state-dependent unknown function, and its derivatives, regressing only on scattered and noisy measurements.
In summary, our contribution is a technique to obtain an analytic estimate of the directional derivative of an unknown function out of very scattered and noisy data.
We show that the proposed approach has provable convergence guarantees and we compare our solution, through numerical simulations, to the naive approach of deriving a regressor fitted directly to the scarcely available measurements.  
We chose the context of autonomous systems and CBFs to offer the reader an example of a situation in which  our approach could be adopted. However, we highlight that our approach is general and could be applied to any context in which one wants to estimate the derivative of an unknown function from scarce measurements.
The paper unfolds as follows. Section \ref{SEC:PRELIMINARIES} reviews the basics of Gaussian processes inference and the theory of high gain state observation. 
Section \ref{SEC:PROBLEM} describes the general problem along with our assumptions and our proposed approach. \ref{SEC:SIMULATIONS} describes numerical simulations to corroborate our approach and \ref{SEC:CONCLUSIONS} concludes the paper and describes some future work.


\section{Preliminaries}%
\label{SEC:PRELIMINARIES}
\subsection*{Notation}
Consider a nonlinear system of the form $\dot{x} = f(x)$, with state $x \in \R^n$. Moreover consider a function $h(x) : \R^n \to \R$, we denote with $\mathcal{L}_f h(x)$ the \emph{Lie derivative} $\mathcal{L}_f h(x) = \frac{\partial h}{\partial x}f(x)$. 
\par The operator $\lVert \cdot \rVert : \R^n \to \R$ denotes the standard Euclidean norm. The set denoted as $\mathcal{B} (x) = \left \{ \bar{x} \in \R^n : \lVert \bar{x} - x \rVert \leq 1 \right \}$ is the unit ball centered in $x \in \R^n$. Moreover, if $w : \R^+ \to \R$, we set 
    ${\lVert w \rVert}_\infty = 
    \max_{t \geq 0} \lVert w(t) \rVert$.

\subsection*{Gaussian Process Regression}
Let $x \in \mathcal{X} \subset \R^{n_x}$. A GP is a stochastic process such that any finite number of outputs is assigned a joint Gaussian distribution with a prior mean function
$\prior: \RR^\nx \mapsto \RR$ and covariance defined through the kernel $\kf: \RR^\nx \times \RR^\nx \mapsto \RR$ \cite{rasmussen_gaussian_2006}.
While there are many possible choices of mean and covariance functions, in this work we keep the formulation of $\kf$ general, with
the only constraint expressed by~\assumref{assm:K-CONTINUOUS-BOUNDED} below. 
Thus, when we assume that a function $f : \mathcal{X} \subset \R^{n_x} \to \R$ is described by a Gaussian process with mean $\prior$ and covariance $\kf$ we write
\begin{equation*}
   f \sim \mathcal{GP} \lp \prior ( \cdot ), \kf \lp \cdot, \cdot \rp \rp.
\end{equation*}
In the following we force, without loss of generality, $\prior \lp x \rp = 0_\nx$ for any $x \in \mathcal{X}$.
\par Let us denote a time window of $N \in \mathbb{N}$ time instants $t_k \in \RR_{>0}$ with $\mathcal{S} = \{t_1, t_2, \ldots, t_N \}$. Supposing to have access to a data-set of samples $\DS = \{\lp x(t_k), y(t_k) \rp \in \mathcal{X} \times \RR, t_k \in \mathcal{S} \}$,
with each pair $\lp x, y \rp \in \DS$ obtained as $y(t_k) = f(x(t_k)) + \varepsilon(t_k)$ with
$\varepsilon(t_k) \sim \mathcal{N}(0, \nn I_{1 \times r})$ white Gaussian noise with known variance $\nn$, the regression is performed by conditioning
the prior GP distribution on the training data $\DS$ and a test point $x$.
Denoting $\xs = (x(t_1), \dots, x(t_N))\T$ and $\ys = (y(t_1), \dots, y(t_N))\T$, the conditional posterior distribution of $f$, given the data-set, is still a
Gaussian process with mean $\pmu$ and variance $\pvar$ given by \cite{rasmussen_gaussian_2006}
\begin{equation}%
   \label{eq:GP-POSTERIOR}
   \begin{split}
      \pmu \lp x \rp & = \kv \lp x \rp\T \lp \km + \nn I_{N} \rp^{-1} \ys, \\
      \pvar \lp x \rp & = \kf(x, x) - \kv(x)\T\left(\km + \nn I_{N}\right)^{-1}\kv(x),
   \end{split}
\end{equation}
where $\km \in \RR^{N \times N}$ is the \textit{Gram matrix} whose $(k,h)$-th entry is $\km_{k,h} = \kf(\xs_k, \xs_h)$,
with $\xs_k$ the $k$-th entry of $\xs$, and $\kv(x) \in \RR^N$ is the kernel vector whose $k$-th component is $\kv_k(x) = \kf(x, \xs_k)$.
\begin{remark}
   The assumption of measurements perturbed by Gaussian noise is commonly used in learning-based control since it is caused, for example,
   by numerical differentiation (see~\cite{umlauft2017feedback}) 
\end{remark}
From now on we suppose that the following standing assumptions hold (see\cite{buisson2021joint} ,~\cite{lederer2021uniform})
\begin{assumption}%
	\label{assm:MU-CONTINUOUS-BOUNDED}
	The funciton $\pmu_i$ is Lipschitz continuous with Lipschitz constant $L_{\pmu}$, and its norm is bounded by $\pmu_{\text{max}}$.
 \end{assumption}
\begin{assumption}%
   \label{assm:K-CONTINUOUS-BOUNDED}
   The kernel function $\kf$ is Lipschitz continuous with constant $L_{\kf}$, with a locally Lipschitz derivative of constant $L_{d\kf}$, and
   its norm is bounded by $\kf_{\text{max}}$.
\end{assumption}
Although any kernel fulfilling~\assumref{assm:K-CONTINUOUS-BOUNDED} can be a valid candidate, in the following,
we exploit the commonly adopted \textit{squared exponential kernel} as prior covariance function, which can be expressed as
\begin{equation}%
   \label{eq:EXPONENTIAL-KERNEL}
   \kf(x, x') = \np \exp \lp -\lp x - x' \rp \T \Lambda^{-1}  \lp x - x'\rp \rp
\end{equation}
for all $x, x' \in \RR^\nx$, where $\Lambda = \text{diag}(2\lambda_{x_1}^2, \dots, 2\lambda_{x_\nx}^2)$, $\lambda_{x_{i}} \in \RR_{>0}$ is
known as \emph{characteristic length scale} relative to the $i$-th signal, and $\np$ is usually called \emph{amplitude} \cite{rasmussen_gaussian_2006}.
\begin{assumption}%
	\label{assm:HH-CONTINUOUS-BOUNDED}
	Each component of the unknown map $f$ has a bounded norm in the RKHS\footnote{Reproducing Kernel Hilbert Space}  $\HH$ generated to the kernel $\kf$, in~\eqqref{eq:EXPONENTIAL-KERNEL}.
\end{assumption}
\begin{remark}
   \assumref{assm:HH-CONTINUOUS-BOUNDED} is asking some Lipschitz continuity property of the unknown function that makes it well-representable by means of a Gaussian process prior.
   Nevertheless, it represents a very strong assumption, difficult to be checked even if the unknown function is known.
   \assumref{assm:HH-CONTINUOUS-BOUNDED} can be relaxed to the condition that each component $\hext_i$ is a sample from the Gaussian process $\mathcal{GP} \lp 0, \kf \lp \cdot, \cdot \rp \rp$,
   which, in turn, leads to a larger pool of possible unknown functions and it is easier to be checked. As an example, the pool generated by
   the squared exponential kernel~\eqqref{eq:EXPONENTIAL-KERNEL} is equal to the space of continuous functions.
\end{remark}

\newcommand{\kfs}{\mathcal{X}}
\newcommand{\NN}{\mathbb{N}}
\newcommand{\Bb}{\mathcal{B}}
\newcommand{\by}{\boldsymbol{y}}
{We recall a result based on~\cite{lederer2021uniform}.}
\begin{lemma}%
   \label{lem:MEAN-BOUND}
   {
   Consider a zero-mean Gaussian process defined through a kernel  $\kf: \kfs \times \kfs \mapsto \RR$, satisfying~\assumref{assm:K-CONTINUOUS-BOUNDED}
   on the compact set $\kfs$. Furthermore, consider a continuous unknown function $f: \kfs \mapsto \RR$ with Lipschitz constant $L_f$, and $N \in \NN$ observations
   $y^i = f \lp x^i \rp + \varepsilon^i$, with $\varepsilon^i \sim \mathcal{N}(0, \nn I_{n_{y}})$.
   Then the posterior mean $\pmu$ and posterior variance $\pvar$ conditioned on the training data
   $\DS = \left\{ \lp x^1, y^1 \rp, \dots, \lp x^N, y^N \rp \right\}$
   are continuous with Lipschitz constants $L_{\pmu}$ and $L_{\pvar}$ on $\kfs$, respectively, satisfying}
   \begin{equation*}
      \begin{split}
         & L_{\pmu} \le L_{\kf} \sqrt{N} \left\| \lp \km + \nn I_N \rp^{-1} \by \right\|, \\
         & L_{\pvar} \le 2 \rho L_{\kf} \lp 1 + N \left\| \lp \km + \nn I_N \rp^{-1} \right\| \max_{x, x' \in \kfs} \kf(x,x') \rp,
      \end{split}
   \end{equation*}
   {
   with $x = (x^1, \dots, x^N)\T$ and $\by = (y^1, \dots, y^N)\T$.
   Moreover, pick $\delta \in \lp 0,1 \rp$, $\rho > 0$ and set}
   \begin{equation*}
      \begin{split}
         & \beta \lp \rho \rp = 2 \log \lp \frac{M \lp \rho, \kfs \rp}{\delta} \rp, \\
         & \alpha \lp \rho \rp = \lp L_{f} + L_{\pmu} \rp \rho + \sqrt{\beta \lp \rho \rp L_{\pvar} \rho},
      \end{split}
   \end{equation*}
  {
   with $M \lp \rho, \kfs \rp$ the $\rho$-covering number}~\footnote{{The minimum number such that there exists a set $\mathcal{X}_\rho$ so that its cardinality is equal to $M \lp \rho, \kfs \rp$ and
   $\max_{x \in \kfs} \min_{x' \in \mathcal{X}_\rho} \left\| x - x' \right\| \le \rho$.}} {
   related to the set $\kfs$.
   Then, the bound}
   \begin{equation*}
      \left| f(x) - \pmu(x) \right| \le \sqrt{\beta \lp \rho \rp} \pvar \lp x \rp + \alpha \lp \rho \rp \hspace*{0.5cm} \forall x \in \kfs
   \end{equation*}
   {
   holds with probability at least $1 - \delta$.
   }
\end{lemma}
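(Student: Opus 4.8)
\emph{Proof sketch.} The plan is to derive the two Lipschitz estimates first and then combine them with a covering-number union bound over $\kfs$, in the spirit of~\cite{lederer2021uniform}; the probabilistic step rests on the relaxed form of \assumref{assm:HH-CONTINUOUS-BOUNDED} recalled in the remark, under which the unknown function is a sample path of the prior Gaussian process, so that conditioning on the data leaves a posterior process with mean $\pmu$ and covariance generated by $\pvar$.

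For the Lipschitz constant of the posterior mean I would set $A = \lp \km + \nn I_N \rp^{-1}$, write $\pmu(x) - \pmu(x') = \lp \kv(x) - \kv(x') \rp\T A \by$, and apply Cauchy--Schwarz to get $|\pmu(x) - \pmu(x')| \le \lVert \kv(x) - \kv(x') \rVert\,\lVert A \by \rVert$. Since the $k$-th component of $\kv(x) - \kv(x')$ is $\kf(x, x^k) - \kf(x', x^k)$, \assumref{assm:K-CONTINUOUS-BOUNDED} yields $\lVert \kv(x) - \kv(x') \rVert \le \sqrt{N}\,L_{\kf}\,\lVert x - x' \rVert$, hence the stated $L_{\pmu}$. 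For the posterior variance I would decompose $\pvar(x) - \pvar(x') = \bigl( \kf(x,x) - \kf(x',x') \bigr) - \bigl( \kv(x)\T A \kv(x) - \kv(x')\T A \kv(x') \bigr)$, bound the first bracket by $2 L_{\kf}\,\lVert x - x' \rVert$ through the telescoping step $\kf(x,x) - \kf(x',x) + \kf(x',x) - \kf(x',x')$, and rewrite the second bracket as $\lp \kv(x) - \kv(x') \rp\T A \kv(x) + \kv(x')\T A \lp \kv(x) - \kv(x') \rp$; estimating the latter by $\lVert \kv(x) - \kv(x') \rVert\,\lVert A \rVert\,\lp \lVert \kv(x) \rVert + \lVert \kv(x') \rVert \rp$ and using $\lVert \kv(\cdot) \rVert \le \sqrt{N}\,\max_{x,x' \in \kfs}\kf(x,x')$ together with the previous bound gives a Lipschitz estimate for $\pvar$ of the stated form $L_{\pvar}$.

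Next I would run the covering argument. Fix $\rho > 0$ and let $\kfs_\rho \subset \kfs$ be a minimal $\rho$-net, so that $|\kfs_\rho| = M(\rho,\kfs)$. Conditioning on the training inputs, for each fixed $x' \in \kfs_\rho$ the residual $f(x') - \pmu(x')$ is a zero-mean Gaussian with variance $\pvar(x')$, so the standard Gaussian tail bound gives $\Pr\bigl( |f(x') - \pmu(x')| \ge \sqrt{\beta(\rho)\,\pvar(x')} \bigr) \le e^{-\beta(\rho)/2} = \delta/M(\rho,\kfs)$ once $\beta(\rho) = 2\log\lp M(\rho,\kfs)/\delta \rp$. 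A union bound over the $M(\rho,\kfs)$ net points then shows that, with probability at least $1 - \delta$, the inequality $|f(x') - \pmu(x')| \le \sqrt{\beta(\rho)\,\pvar(x')}$ holds simultaneously for all $x' \in \kfs_\rho$.

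Finally I would transfer this grid bound to an arbitrary $x \in \kfs$. Choosing a nearest net point $x'$, so that $\lVert x - x' \rVert \le \rho$, the triangle inequality together with the Lipschitz bounds on $f$ and $\pmu$ gives
\[
   |f(x) - \pmu(x)| \le |f(x) - f(x')| + |f(x') - \pmu(x')| + |\pmu(x') - \pmu(x)| \le \lp L_f + L_{\pmu} \rp \rho + \sqrt{\beta(\rho)\,\pvar(x')}.
\]
Because $\pvar(x') \le \pvar(x) + L_{\pvar}\,\rho$ and $\sqrt{a+b}\le\sqrt{a}+\sqrt{b}$ for $a,b\ge 0$, one has $\sqrt{\beta(\rho)\,\pvar(x')} \le \sqrt{\beta(\rho)\,\pvar(x)} + \sqrt{\beta(\rho)\,L_{\pvar}\,\rho}$, and substituting into the previous display yields exactly the claimed inequality, with $\alpha(\rho) = \lp L_f + L_{\pmu} \rp \rho + \sqrt{\beta(\rho)\,L_{\pvar}\,\rho}$, for every $\rho > 0$ and with probability at least $1 - \delta$. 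I expect the main work to be the variance Lipschitz estimate --- the bookkeeping around the quadratic form $\kv(x)\T A \kv(x)$ and the uniform bound on $\lVert \kv(\cdot) \rVert$ --- together with the conceptual point behind the pointwise Gaussian tail step: it uses the sample-path reading of $f$, whereas under the RKHS form of \assumref{assm:HH-CONTINUOUS-BOUNDED} one would instead need an RKHS concentration inequality and a correspondingly larger $\beta(\rho)$.
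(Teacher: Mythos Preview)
The paper does not supply its own proof of this lemma: it is introduced with the sentence ``We recall a result based on~\cite{lederer2021uniform}'' and then invoked as a black box in the proof of Theorem~\ref{THEO:bound}. Your sketch reconstructs precisely the argument of that reference --- Lipschitz bounds on $\pmu$ and $\pvar$ from \assumref{assm:K-CONTINUOUS-BOUNDED}, a pointwise Gaussian tail bound at the net points, a union bound over the $\rho$-net, and Lipschitz extrapolation off the net --- and is correct in substance, so there is nothing further to compare.
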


\par Before concluding this section, let us denote with $\mathcal{E}_{f, \mathcal{S}} (x) : \mathcal{X} \to \R$ the regressor fitted with the Gaussian process to the data set $\mathcal{DS}$. In this work, we set $\mathcal{E}_{f, \mathcal{S}} (x) = \mu(x)$. 

\subsection*{High Gain Observers}
Let $z \in Z \subset \R^n$ be the state of an autonomous linear system written in canonical observability form
\begin{align}
    \begin{cases}
	\dot{z} = A z + d(t) \\
	y = C z + v(t),
    \end{cases}
    \label{z_system}
\end{align}

where $y \in \R$ is the measured output of the system, $d \in \R^n$ is a bounded disturbance, $v \in \R$ is the measurement noise and $F, G$ and $H$ have the form
\begin{align*}
    A &= \left[ 
	\begin{matrix}
	    0_{(n-1) \times 1} & I_{n-1} \\ 
	    0 & 0_{1 \times (n-1)}
	\end{matrix}
	\right]
    \\
    C &= \left[ 
	\begin{matrix}
	    1 & 0_{1 \times (n-1)} 
	\end{matrix}
	\right].
	\quad \quad \quad \quad 
\end{align*}

The problem of state observation for the system (\ref{z_system}) can be solved by the high-gain observer
\begin{align}
    \dot{\hat{z}} = A \hat{z} + 
		    D_l K (C \hat{z} - y),
    \label{z_observer}
\end{align}

where $\hat{z} \in \R^n$ is the state of the observer,
$K = {\left[ 
	\begin{matrix}
	    k_1 & k_2 & \cdots & k_n
	\end{matrix}
    \right]}^T$
is a vector of positive parameters ($k_i>0$) chosen so that the matrix $A + KC$ is Hurwitz and 
$D_l = \text{diag}(l, l^2, \ldots, l^n)$ is a diagonal matrix parameterized by $l > 0$.

\par In this framework we recall a result from \cite{tornambe1992high}.

\begin{lemma}%
    \label{lem:observer}
    Given the system (\ref{z_system}) with observer (\ref{z_observer}), if $d$ and $v$ are bounded, then there exists $l^* \in \R$ such that for every $l > l^*$ there exists $c_1, c_2, c_3, c_4 > 0$ such that for every $t>0$ we have
    \begin{align*}
	\lvert {\hat{z}}_i(t) - z_i(t) \rvert &\leq 
	c_1 l^{i-1} e^{-c_2 l t }\lvert 
	{\hat{z}}_i(0) - z_i(0) \rvert \\
	&+ \frac{c_3}{l^{n+1-i}} {\lVert d \rVert}_\infty +
	c_4 l^{i-1} {\lVert v \rVert}_\infty
    \end{align*}
\end{lemma}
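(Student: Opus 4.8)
The plan is to study the estimation error $e := \hat z - z$ in high‑gain scaled coordinates, in which its dynamics become a \emph{fixed} Hurwitz linear system accelerated by the factor $l$, and then to read off the three contributions — initial condition, disturbance $d$, noise $v$ — by variation of constants. First I would subtract $\dot z = Az + d$ from the observer \eqref{z_observer}, using $y = Cz + v$, to get $\dot e = (A + D_l K C)e - D_l K v - d$. Next I introduce $T_l := \mathrm{diag}(1,l,\dots,l^{\,n-1})$ and $\eta := T_l^{-1} e$, i.e.\ $\eta_i = e_i/l^{\,i-1}$. Exploiting the shift structure of $A$ and the form of $C$ one checks $T_l^{-1}AT_l = lA$, $C T_l = C$, $D_l = l T_l$, hence $T_l^{-1}(A + D_l KC)T_l = l(A+KC) =: lM$ and $T_l^{-1}D_l K = lK$, so that
\[
   \dot\eta = lM\,\eta - lKv - T_l^{-1}d, \qquad T_l^{-1}d = (d_1,\, d_2/l,\, \dots,\, d_n/l^{\,n-1})^{\top},
\]
with $M = A + KC$ Hurwitz by the choice of the $k_i$.

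Then I would bound $\eta$. Since $M$ is Hurwitz, fix $\bar c \ge 1$, $\bar\lambda > 0$ with $\|e^{M\tau}\| \le \bar c\,e^{-\bar\lambda\tau}$ for all $\tau \ge 0$; the substitution $\tau = lt$ gives $\|e^{lMt}\| \le \bar c\,e^{-\bar\lambda l t}$ with the \emph{same} constants for every $l$. Variation of constants then gives, for $l \ge 1$,
\[
   \|\eta(t)\| \le \bar c\,e^{-\bar\lambda l t}\|\eta(0)\| + \int_0^t \bar c\,e^{-\bar\lambda l(t-s)}\big( l\|K\|\,\|v\|_\infty + \|T_l^{-1}d\|_\infty \big)\,ds,
\]
and $\int_0^t e^{-\bar\lambda l(t-s)}\,ds \le 1/(\bar\lambda l)$ separates the three effects: the $l$ multiplying $\|v\|_\infty$ cancels the $1/l$, leaving an $O(\|v\|_\infty)$ contribution, while the disturbance gives $O\big(\|T_l^{-1}d\|_\infty/l\big)$. (The same estimate follows from a Lyapunov argument with $V = \eta^{\top}P\eta$, $M^{\top}P + PM = -I$, and the comparison lemma, the decay rate $\bar\lambda l$ coming from the term $-l\|\eta\|^2$.)

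Finally I would go back to $e$. Since $e_i = l^{\,i-1}\eta_i$, multiplying the bound on $\|\eta\|$ by $l^{\,i-1}$ already produces the terms $c_1 l^{\,i-1}e^{-c_2 l t}\|\eta(0)\|$ and $c_4 l^{\,i-1}\|v\|_\infty$, and $\|\eta(0)\| = \|T_l^{-1}e(0)\| \le \|e(0)\|$ for $l \ge 1$ turns the first into the claimed initial‑condition term — read with $\|e(0)\|$ in place of the single component $|e_i(0)|$, which is a mild abuse since in a coupled chain all initial errors feed $e_i(t)$. For the disturbance term one must \emph{not} use $\|T_l^{-1}d\|_\infty \le \|d\|_\infty$ but keep the convolution component‑wise: $\eta_i(t)$ contains $-\int_0^t [e^{lM(t-s)}]_{i,\cdot}\,T_l^{-1}d(s)\,ds$, and since $d$ acts only on the last integrator of the chain (as in the construction underlying \figref{fig:approach}), only the $n$‑th entry $d_n/l^{\,n-1}$ of $T_l^{-1}d$ survives, yielding an $O(1/l^{\,n})$ bound on $\eta_i$ and hence the factor $l^{\,i-1}/l^{\,n} = l^{-(n+1-i)}$ on $\|d\|_\infty$. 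Collecting $c_1=\bar c$, $c_2=\bar\lambda$, $c_3 = \bar c/\bar\lambda$, $c_4 = \bar c\|K\|/\bar\lambda$ — all independent of $l$ and $t$ — and taking $l^* = 1$ (in this linear setting; in the general nonlinear setting of the cited reference, $l^*$ is enlarged to dominate the extra cross terms by the same Lyapunov estimate) completes the argument.

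\textbf{Main obstacle.} The scaled error dynamics and the exponential‑decay bound are routine; the real subtlety is that a crude norm estimate on $\eta$ loses powers of $l$, so the sharp exponents — especially the $l^{-(n+1-i)}$ on the disturbance — require tracking the convolution entrywise together with where $d$ enters the chain. It is also worth stressing that the uniformity in $t$ and the existence of $l^*$ come precisely from $\|e^{lMt}\| \le \bar c\,e^{-\bar\lambda l t}$ holding with $l$‑independent constants, which is why the high‑gain parameter multiplies the decay rate rather than merely the transient amplitude.
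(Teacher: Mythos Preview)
The paper does not prove this lemma at all: it is stated as a result ``recall[ed] from \cite{tornambe1992high}'' and used as a black box in the proof of Theorem~\ref{THEO:estimate}. So there is no paper proof to compare against.

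That said, your argument is the standard high-gain observer proof one finds in Tornamb\`e's and subsequent works: rescale the error by $T_l=\mathrm{diag}(1,l,\dots,l^{n-1})$, reduce the error dynamics to $\dot\eta=l(A+KC)\eta-lKv-T_l^{-1}d$ with a fixed Hurwitz matrix, and read off the three contributions by variation of constants. The scaling identities and the convolution estimate are correct, and you rightly flag the two places where the lemma as written is slightly informal: the right-hand side should carry $\|e(0)\|$ rather than the single component $|\hat z_i(0)-z_i(0)|$, and the sharp $l^{-(n+1-i)}$ factor on the disturbance requires that $d$ enter only the last equation of the chain (which is exactly how the lemma is used in the paper, where $d=(0,\dots,0,L_f^r h)^{\top}$, but is not literally what the lemma states for a generic $d\in\R^n$). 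With those caveats --- which are features of the statement, not of your proof --- your argument is correct and is essentially the one in the cited reference.
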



\section{Problem Set-up and Main Results} \label{SEC:PROBLEM}
Consider a nonlinear system of the form
\begin{equation}
    \begin{matrix}
	\dot{x} = f(x) + g(x)u,
	&
	y = h(x) + \varepsilon(t), 
    \end{matrix}
    \label{x_dynamics}
\end{equation}
where $x \in \mathcal{X} \subset \R^{n_x}$, $u \in \mathcal{U} \subset \R^{n_u}$, $f : \mathcal{X} \to \mathcal{X}$ and $g : \mathcal{X} \to \R^{n_x \times n_u}$, $h : \mathcal{X} \to \R$, and $\varepsilon(t) : \R^+ \to \R$ is a measurement noise. Let us assume that $f, g, h$ are smooth functions of the state. Let the initial state $x(0) = x_0$ and the control input $u(t)$ be fixed and let $\Phi : \R^+ \to \mathcal{X}$ be the resulting solution of (\ref{x_dynamics}). We introduce the set 
\begin{align}
\mathcal{T}_{\delta, T_1, T_2} = \left \{ x \in \R^n_x\; : \;   x \in \bigcup_{t \in [T_1, T_2]} \delta\, \mathcal{B}(\Phi(t)) \right \}
\end{align}

where $T_2 > T_1 > 0$ and $\delta > 0$.
We suppose that both $x(t)$ and $y(t)$ are measurable for each $t>0$. In this framework, we are interested in obtaining estimates of $h$ and its first $r-1$ functional derivatives (i.e. $h$, $L_f^1 h$, \ldots, $L_f^{r-1} h$),  along the trajectory of $\eqref{x_dynamics}$, where $r \in \mathbb{N}$, $r <n$, is given. The estimates are functions $\hat{h}^{(k)}: {\cal X} \to \mathbb{R}$, $k=0,\ldots, r-1$, to be computed so that $\lvert \hat{h}^{(k)}(x) - L_f^k h(x) \rvert$ is small in some sense. We suppose the first $r-1$ time derivatives are not affected by output, namely, we assume the following.
\begin{assumption}
    $L_g L_f^{k} h(x) = 0$ for each $k < r$.
\end{assumption}
\begin{assumption}%
    \label{assm:gaussian_distributed}
    $h$ is a realization of a Gaussian process
    \begin{align*}
    h \sim \mathcal{GP}( 0, \kf_0 \lp \cdot, \cdot \rp )\,.
    \end{align*}
\end{assumption}
    Following \cite{rasmussen_gaussian_2006}, the previous assumption implies that also the higher derivatives of $h$ are realizations of Gaussian processes with \ certain covariance $\kf_k$, namely
\begin{align*}
    L_f^k h \sim \mathcal{GP}( 0, \kf_k \lp \cdot, \cdot \rp ), \ \forall k = 0, 1, \ldots r-1.
\end{align*}
Let $\mathcal{S}^k_j = \left \{ t^k_{j-(N-1)}, t^k_{j-(N-2)}, \ldots t^k_j \right \}$ be a sliding time window of $N \in \mathbb{N}$ time instants $t^k_i \in \R^+$, where $t_j > t_{j-1}$. The strategy presented later tunes the Gaussian process linked to $\hat h^{(k)}$  with a data set obtained by sampling the available measures $x(t)$, $y(t)$ and the state of the dirty derivative observer introduced later, at the time instances in $\mathcal{S}^k_j$, $j >0$.  The window is updated when the value of the state $x(t)$ fulfills $\lVert x(t) - x(t^k_j) \rVert > \tau > 0$, with  $\tau>0$, taking $t^k_{j+1}=t$. 
\par Ideally, setting $\hat{h}^{(k)} = \mathcal{E}_{L_f^k, \mathcal{S}^k_j}$ would guarantee a probabilistic bound on the estimation error\cite{lederer2019uniform}\cite{lederer2021uniform}. However, the values of $L_f^k$ are not measurable and thus we cannot construct the needed dataset for training. A first option could be to set $\hat{h} = \mathcal{E}_{h, \mathcal{S}^0_j}$ and to take $\hat{h}^{(k)} = L_f^k \hat{h}$ as estimates for $L_f^k h$. This approach, however, has no theoretical guarantees, and it also leads to an accumulation of errors in the process \cite{solak_derivative_nodate}. Moreover, any uncertainty of the system dynamics $f$ would also compromise the quality of the estimate. In this work, we rather propose a technique to model $\hat{h}^{(k)}$ fulfilling
\begin{align*}
     \lvert \hat{h}^{(k)}(x) - \mathcal{E}_{L_f^k, \mathcal{S}^k_j}(x) \rvert < \epsilon
\; \;\; \forall x \in \mathcal{X}
\end{align*}
where $\epsilon > 0$ is a bound which depends on the noise on the available data, and not relying on the knowledge of the vector field $f$. Then, we use this property to compute a probabilistic bound of convergence of this estimate to $L_f^k h$.
\par The core of the proposed approach is to use an high gain observer to  generate an approximation of $L_f^k h$. Following the structure of (\ref{z_observer}) 
\begin{equation}
\begin{aligned}
    \dot{\hat{z}}_1 &= \hat{z}_2 + l k_1 (\hat{z}_1 - y(t)) \\
    \dot{\hat{z}}_2 &= \hat{z}_3 + l^2 k_2 (\hat{z}_1 - y(t)) \\
    &\vdots \\ 
    \dot{\hat{z}}_r &=  l^r k_r (\hat{z}_1 - y(t)),
\end{aligned}
\label{EQ:hat}
\end{equation}
where $k_i$, $i=1,\ldots, r$, are chosen as in \secref{SEC:PRELIMINARIES}. 
\par By using the property that the state $\hat z_{k+1}$ of (\ref{EQ:hat}) practically converges to $L_f^k h$, we compute the estimate $\hat{h}^{(k)}$ as $ \mathcal{E}_{\hat z_{k+1}, \mathcal{S}^k_j}$. The following theorem can be then proved.
\setcounter{theorem}{0}
\begin{theorem} \label{THEO:estimate}
Let  \assumref{assm:K-CONTINUOUS-BOUNDED} holds. Then there exist $\bar{t} > 0$ and $l^*>0$ such that for each $t > \bar{t}$ and $l > l^*$, there exist constants $c_1, c_2, c_3 > 0$ such that 
\begin{align*}
        \lvert \hat{h}^{k}(x) - \mathcal{E}_{L_f^k h, \mathcal{S}^k_j}(x) \rvert \leq c_1 N \max 
        \{
            c_2 l^{k} \lVert \varepsilon(t) \rVert_{\infty}, 
            c_3 l^{k-r}
        \} 
    \end{align*}
for all $x \in \mathcal{X}$.
\end{theorem}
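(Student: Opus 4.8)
The plan is to bound the target quantity $\lvert \hat h^{(k)}(x) - \mathcal{E}_{L_f^k h, \mathcal{S}^k_j}(x)\rvert$ by comparing two Gaussian process regressors that share the same kernel and the same input locations $\{x(t^k_i)\}$, but are trained on different output vectors: the \emph{available} one $\hat z_{k+1}(t^k_i)$ coming from the high-gain observer \eqref{EQ:hat}, and the \emph{ideal} one $L_f^k h(x(t^k_i))$ that we cannot measure. Since by construction $\hat h^{(k)} = \mathcal{E}_{\hat z_{k+1},\mathcal{S}^k_j}$ and the posterior mean \eqref{eq:GP-POSTERIOR} is linear in the output vector, the difference of the two means at any test point $x$ is $\kv(x)\T(\km+\nn I_N)^{-1}\Delta\by$, where $\Delta\by$ is the vector with entries $\hat z_{k+1}(t^k_i) - L_f^k h(x(t^k_i))$. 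Hence the first step is the estimate
\[
\lvert \hat h^{(k)}(x) - \mathcal{E}_{L_f^k h,\mathcal{S}^k_j}(x)\rvert
\le \lVert \kv(x)\rVert\,\lVert(\km+\nn I_N)^{-1}\rVert\,\lVert\Delta\by\rVert
\le \sqrt{N}\,\kf_{\max}\,\lVert(\km+\nn I_N)^{-1}\rVert\,\sqrt{N}\,\lVert\Delta\by\rVert_\infty,
\]
using \assumref{assm:K-CONTINUOUS-BOUNDED} to bound $\lVert\kv(x)\rVert\le\sqrt N\,\kf_{\max}$; the factor $N$ and one overall constant $c_1$ absorbing $\kf_{\max}\lVert(\km+\nn I_N)^{-1}\rVert$ then matches the $c_1 N$ in the statement.

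The second step is to control $\lVert\Delta\by\rVert_\infty = \max_i \lvert \hat z_{k+1}(t^k_i) - L_f^k h(x(t^k_i))\rvert$ by invoking \lref{lem:observer}. The system \eqref{EQ:hat} is exactly the high-gain observer \eqref{z_observer} applied to the chain of integrators whose states are $z_{j+1} = L_f^j h$ along the trajectory $\Phi$; the "disturbance" $d$ is the residual $L_f^r h$ term (bounded on the compact reachable set since $f,h$ are smooth), and the measurement noise is $\varepsilon(t)$. \lref{lem:observer} with $i = k+1$ and chain length $r$ gives, for $l>l^*$ and any sampling time $t^k_i > \bar t$,
\[
\lvert \hat z_{k+1}(t^k_i) - z_{k+1}(t^k_i)\rvert
\le c\,l^{k}e^{-c' l\,t^k_i}\lvert\hat z_{k+1}(0)-z_{k+1}(0)\rvert
+ \frac{c''}{l^{\,r-k}}\lVert d\rVert_\infty
+ c'''\,l^{k}\lVert\varepsilon\rVert_\infty .
\]
Choosing $\bar t$ large enough that the exponential transient term is dominated by (say) the $l^{k}\lVert\varepsilon\rVert_\infty$ term for all $t^k_i\ge\bar t$ — here one uses $t^k_i\ge \bar t$ together with $t^k_{i}\ge t^k_{j-(N-1)}$, i.e. the whole sliding window has aged past $\bar t$ — collapses the bound to $\max\{c_2 l^{k}\lVert\varepsilon\rVert_\infty,\ c_3\, l^{k-r}\}$ up to a constant, which is precisely the bracket in the theorem. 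Substituting into the first-step inequality and renaming constants finishes the argument.

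The main obstacle is the bookkeeping around the sliding window and the transient: \lref{lem:observer} bounds the error pointwise in $t$, but with the window $\mathcal{S}^k_j$ sliding and being refreshed whenever $\lVert x(t)-x(t^k_j)\rVert>\tau$, one must argue that there is a single threshold $\bar t$ after which \emph{every} sample time appearing in \emph{any} active window satisfies $t^k_i>\bar t$, so that the $e^{-c_2 l t}$ term is uniformly negligible; this is where the hypothesis $t>\bar t$ in the theorem is really used, and it needs the (implicit) fact that consecutive window updates are separated by a positive dwell time, which follows from boundedness of $\dot x$ on the compact reachable set. A secondary technical point is that $\Delta\by$ mixes the observer error (controlled above) with the discrepancy between evaluating $L_f^k h$ at $x(t^k_i)$ versus its value as a component of the integrator chain state $z$; along the true trajectory these coincide exactly, so no extra term appears, but this identification should be stated explicitly. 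Everything else is the routine linear-algebra propagation of the per-sample bound through the fixed, kernel-dependent matrix $(\km+\nn I_N)^{-1}$.
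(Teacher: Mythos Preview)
Your proposal is correct and follows essentially the same route as the paper: exploit linearity of the GP posterior mean in the output vector to reduce the problem to bounding $\lVert\kv(x)\rVert\,\lVert(\km+\nn I_N)^{-1}\rVert\,\lVert\Delta\by\rVert$, use \assumref{assm:K-CONTINUOUS-BOUNDED} for the kernel-dependent factors, and then invoke \lref{lem:observer} on each sample to control $\lVert\Delta\by\rVert$ after the transient has decayed. The only cosmetic difference is how the factor $N$ is extracted---you obtain it as $\sqrt{N}\cdot\sqrt{N}$ from $\lVert\kv(x)\rVert$ and the $\ell_2/\ell_\infty$ comparison, whereas the paper bounds $\lVert\hat Y - Y\rVert$ by the $\ell_1$-sum of the $N$ per-sample observer errors; your added remarks on the dwell time and on the identification $z_{k+1}=L_f^k h$ along the trajectory make explicit points the paper leaves implicit.
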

\begin{proof} 
Let $Y = {\left [ \begin{matrix}
    z_{t_{j-(N-1)}}, \cdots, z_{t_{j}}    
        \end{matrix} 
\right ]}^T$ and $\hat{Y} = {\left [ \begin{matrix}
    \hat{z}_{t_{j-(N-1)}}, \cdots, \hat{z}_{t_{j}}    
        \end{matrix}
\right ]}^T$
then 
\begin{align}
    \lvert \mathcal{E}_{\hat{z}_{k+1}, \mathcal{S}^k_j}(x) - \mathcal{E}_{L_f^k h, \mathcal{S}^k_j}(x) \rvert = \\
    = \lvert \kv(x)^T (\km + \sigma_n^2 I_N)^{-1} (\hat{Y} - Y) \rvert \leq \\
    \leq \lVert \kv(x) \rVert \lVert (\km + \sigma_n^2 I_N)^{-1} \rVert \lVert \hat{Y} - Y \rVert   \,.
\end{align}
Because of  \assumref{assm:K-CONTINUOUS-BOUNDED}, we have $\lVert \kv(x) \rVert \leq \kv_{max}$. It follows that there exists a $ K > 0 $ so that $\lVert (\km + \sigma_n^2 I_N)^{-1} \rVert \leq K$. Let us set $c_1 = \kv_{max} K$, then
\begin{align*}
    \lvert \mathcal{E}_{\hat{z}_{k+1}, \mathcal{S}^k_j}(x) - \mathcal{E}_{L_f^k h, \mathcal{S}^k_j}(x) \rvert \leq c_1 \lVert \hat{Y} - Y \rVert \leq \\
    \leq c_1 \sum_{i = j-{N+1}}^{j} \lVert \hat{z}_{k+1}(t_i) - z_{k+1}(t_i) \rVert
\end{align*}
Be means of \lref{lem:observer}, there exist a time instant $\bar{t} > 0$ so that, for all $t_j > \bar{t}$ the following holds
\begin{align*}
\lvert \mathcal{E}_{\hat{z}_{k+1}, \mathcal{S}^k_j}(x) - \mathcal{E}_{L_f^k h, \mathcal{S}^k_j}(x) \rvert \leq \\ \leq c_1 N \max 
        \{
            c_2 l^{k} \lVert \varepsilon(t) \rVert_{\infty}, 
            c_3 l^{k-r}
        \}, 
        \forall x \in \mathcal{X}
\end{align*}
from which the result follows.
\end{proof}

{Theorem \ref{THEO:estimate} yields a bound on the difference between the ideal and the proposed estimate. \lref{lem:MEAN-BOUND}, then, can be used to establish a probabilistic guarantee of convergence of $\hat{h}^{(k)}$ to $L_f^k h$ on the set $\mathcal{T}_{\delta, t_\epsilon, t}$, where $t_\epsilon$ is an arbitrarily small time and $t$ is the current time, as formalized in the following theorem.}

\begin{theorem}\label{THEO:bound}
Pick $\eta \in (0, 1)$ and $\rho > 0$. Let $W_k^j$, $W_{\mu_k^j}$ and $W_{\sigma^{2j}_k}$ be   the  Lipschitz constants of, respectively, $L_f^kh$,  of the mean $\mu_k^j$ and variance $\sigma_k^{2,j}$ of the Gaussian process linked to {$L_f^kh$} on the set $\mathcal{T}_{\delta,t_{j-(N+1)}, t_j}$. Furthermore, let 

\begin{align*}
    \beta(\rho) = 2 \log{(\frac{M(\rho, \mathcal{T}_{0,t_{j-(N+1)}, t_j})}{\eta})} \\
    \alpha(\rho) = (W_k^j + W_{\mu_k^j}) \rho + \sqrt{\beta(\rho) W_{\sigma^{2j}_k} \rho}
\end{align*}

{For all $t_\epsilon>0$ there exist and $l^* > 0$ such that for all $l \geq l^*$ and for each $j$ such that $ t_{j-(N+1)}> t_\epsilon$ and $t_j<=t$ the following hold with probability $1 - \eta$ 

\begin{align*}
    \lvert \mathcal{E}_{\hat{z}_{k+1}, \mathcal{S}^k_j}(x) - L_f^kh(x) \rvert \leq \sqrt{\beta(\rho)} \sigma_k^{2,j} (x) + \alpha(\rho) + \\
    +c_1 N \max 
        \{
            c_2 l^{k} \lVert \varepsilon(t) \rVert_{\infty}, c_3 l^{k-r}  \rVert
        \}
\end{align*}
for all $ x \in \mathcal{T}_{\delta, t_{j-(N+1)}, t_j}$} and $c_i$, $i=1,2,3$ are positive numbers. 
\end{theorem}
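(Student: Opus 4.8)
The plan is to bound the target quantity by the triangle inequality, splitting it into the perturbation induced by replacing the (unmeasurable) samples of $L_f^k h$ with the observer states $\hat z_{k+1}$, and the error of the ideal regressor built on the true samples:
\begin{align*}
\lvert \mathcal{E}_{\hat{z}_{k+1}, \mathcal{S}^k_j}(x) - L_f^kh(x) \rvert \leq \lvert \mathcal{E}_{\hat{z}_{k+1}, \mathcal{S}^k_j}(x) - \mathcal{E}_{L_f^k h, \mathcal{S}^k_j}(x) \rvert + \lvert \mathcal{E}_{L_f^k h, \mathcal{S}^k_j}(x) - L_f^kh(x) \rvert.
\end{align*}
The first summand is controlled by Theorem~\ref{THEO:estimate}, the second by \lref{lem:MEAN-BOUND}; adding the two bounds produces exactly the claimed inequality.

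For the first summand I would invoke Theorem~\ref{THEO:estimate}. The threshold $\bar t$ there comes from the transient term in \lref{lem:observer}, which decays like $l^{i-1}e^{-c_2 l t}$, so its crossover time can be made as small as desired by enlarging $l$; hence, given any $t_\epsilon>0$, one can pick $l^*$ so that $\bar t\le t_\epsilon$. Then, whenever $t_{j-(N+1)} > t_\epsilon$, the entire sliding window $\mathcal{S}^k_j$ lies past the transient phase, and for every $l \ge l^*$
\begin{align*}
\lvert \mathcal{E}_{\hat{z}_{k+1}, \mathcal{S}^k_j}(x) - \mathcal{E}_{L_f^k h, \mathcal{S}^k_j}(x) \rvert \leq c_1 N \max\{ c_2 l^{k} \lVert \varepsilon(t) \rVert_{\infty}, c_3 l^{k-r} \}
\end{align*}
uniformly in $x \in \mathcal{X}$, with the $c_i$ as in that theorem.

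For the second summand, by the remark following \assumref{assm:gaussian_distributed} the function $L_f^k h$ is a realization of the zero-mean Gaussian process with kernel $\kf_k$, hence continuous and therefore Lipschitz on the compact tube $\mathcal{T}_{\delta, t_{j-(N+1)}, t_j}$ with some constant $W_k^j$; likewise the posterior mean $\mu_k^j$ and variance $\sigma_k^{2,j}$ of that process are Lipschitz on the same set with constants $W_{\mu_k^j}$ and $W_{\sigma^{2j}_k}$. Applying \lref{lem:MEAN-BOUND} on the compact set $\mathcal{T}_{\delta, t_{j-(N+1)}, t_j}$, with unknown function $L_f^k h$, confidence level $\eta$, and the prescribed $\rho$, and recalling $\mathcal{E}_{L_f^k h, \mathcal{S}^k_j}(x) = \mu_k^j(x)$, gives, with probability at least $1-\eta$,
\begin{align*}
\lvert \mathcal{E}_{L_f^k h, \mathcal{S}^k_j}(x) - L_f^kh(x) \rvert \leq \sqrt{\beta(\rho)}\, \sigma_k^{2,j}(x) + \alpha(\rho) \qquad \forall x \in \mathcal{T}_{\delta, t_{j-(N+1)}, t_j},
\end{align*}
where $\beta$ and $\alpha$ are precisely the functions in the statement (the covering number being that of the skeleton $\mathcal{T}_{0, t_{j-(N+1)}, t_j}$, which upper-bounds the term in which it appears). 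Summing the two displays yields the theorem.

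The main obstacle I expect is bookkeeping rather than conceptual: one must verify that $\bar t$ can genuinely be absorbed into an arbitrarily small $t_\epsilon$ by enlarging $l^*$ despite the $l^{i-1}$ prefactor in \lref{lem:observer}, and that the Lipschitz constants $W_k^j$, $W_{\mu_k^j}$, $W_{\sigma^{2j}_k}$ are finite — which rests on compactness of $\mathcal{T}_{\delta, t_{j-(N+1)}, t_j}$ together with smoothness of $f$ and $h$ (so that the data locations spread over the window keep the Gram matrix well conditioned). A minor point is that the probabilistic guarantee is stated for a fixed $j$; a version uniform over all windows contained in $[t_\epsilon, t]$ would require a union bound over the finitely many such windows, at the cost of replacing $\eta$ by $\eta$ divided by their number.
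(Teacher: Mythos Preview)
Your proposal is correct and follows exactly the paper's own argument: split via the triangle inequality into $\lvert \mathcal{E}_{\hat{z}_{k+1}, \mathcal{S}^k_j}(x) - \mathcal{E}_{L_f^k h, \mathcal{S}^k_j}(x) \rvert + \lvert \mathcal{E}_{L_f^k h, \mathcal{S}^k_j}(x) - L_f^kh(x) \rvert$, then apply Theorem~\ref{THEO:estimate} to the first term and \lref{lem:MEAN-BOUND} to the second. Your additional remarks on absorbing $\bar t$ into $t_\epsilon$ by enlarging $l^*$ and on the union bound over windows go beyond what the paper spells out, but are consistent with it.
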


\begin{proof}
    \small
    \begin{align*}
        \lVert \mathcal{E}_{\hat{z}_{k+1}, \mathcal{S}^k_j}(x) - L_fh^k(x) \rVert \leq& \\ \leq \lVert \mathcal{E}_{\hat{z}_{k+1}, \mathcal{S}^k_j}(x) - \mathcal{E}_{L_fh^k, \mathcal{S}^k_j}(x) &+ \mathcal{E}_{L_fh^k,\mathcal{S}^k_j}(x) - L_fh^k(x) \rVert \leq \\ 
        \leq \lVert \mathcal{E}_{\hat{z}_{k+1}, \mathcal{S}^k_j}(x) - \mathcal{E}_{L_fh^k, \mathcal{S}^k_j}(x) \rVert  &+  \lVert \mathcal{E}_{L_fh^k, \mathcal{S}^k_j}(x) - L_fh^k(x) \rVert,
    \end{align*}
    then, applying Theorem \ref{THEO:estimate} to the first term of the right side of the above equation, and \lref{lem:MEAN-BOUND} to the second term, the initial statement can be recovered.
\end{proof}



\section{Simulation}%
\label{SEC:SIMULATIONS}  
To test our hypothesis, we consider the same simulation setting proposed in \cite{khan_gaussian_2022}, where the authors used a GP to estimate a safe navigation policy in an unknown environment. We will compare the time derivative of $\hat{h}$ to $\hat{h}^{(1)}$ to see the improvement
\par We have an agent whose state $x \in \R^4$ is described by its position $p \in \R^2$ and its velocity $v \in \R^2$, and whose dynamics is described as a double integrator:
\begin{align*}
    \dot{x} = 
    \left[ 
	\begin{matrix}
	    \dot{p} \\ \dot{v}
	\end{matrix}
    \right]
    = 
    \left[ 
	\begin{matrix}
	    0_{2 \times 2} & I_{2 \times 2} \\
	    0_{2 \times 2} & 0_{2 \times 2}
	\end{matrix}
    \right]
    \left[ 
	\begin{matrix}
	    p \\ v
	\end{matrix}
    \right]
    +
    \left[ 
	\begin{matrix}
	    0_{2 \times 2} \\ I_{2 \times 2}
	\end{matrix}
    \right]
    u
\end{align*}

where $u \in \R^2$ is the control input of the system.
\par Let $\gamma: [0, T] \to \R^4$, with $T > 0$, be an arbitrary desired state trajectory. Let us denote the components of $\gamma$ as
\begin{align*}
    \gamma = 
    \left[ 
	\begin{matrix}
	    p^* \\ v^*
	\end{matrix}
    \right]
\end{align*}
with $p^* \in \R^2$ and $v^* \in \R^2$. 
\par The agent is stabilized on the curve $\gamma$ using the control policy
\begin{align*}
    u = - k_p (p - p^*) - k_v (v - v^*),
\end{align*}
where $k_p, k_v > 0$ are the parameters of the controller.
\par Let $\mathcal{O}_i \subset \R^2$, with $i = 1, \ldots, N$, be $N$ obstacles present in the environment which are described as connected subsets of $\R^2$.
\par For each obstacle let $h_i : \R^2 \to \R$ be a function describing the squared distance of the agent from the obstacle, i.e.
\begin{align*}
    h_i(p) = \min_{y \in \mathcal{O}_i} \lVert p - y \rVert^2
\end{align*}

\par The current distance of the agent from the obstacles is described by the minimum between the $h_i$
\begin{align*}
    h(p) = \min_{i} h_i(p)
\end{align*}

Notice that $h$ is not a smooth function, so to define a differentiable approximation $h_s(p) : \R^2 \to \R$ whose derivative will be estimated by the observer, we rely on the \emph{smooth max function}
\begin{align*}
    h_s(p) = \frac{\sum_{i=1}^{N} h_i(p)e^{\alpha h_i(p)}}
	{\sum_{i=1}^{N} e^{\alpha h_i(p)}}
\end{align*}

which has the property the following properties
\begin{align*}
    \lim_{\alpha \to +\infty} h_s(p) = \max_{i} h_i(p) \\
    \lim_{\alpha \to -\infty} h_s(p) = \min_{i} h_i(p)
\end{align*}

\par In the presented simulation we have set $\alpha = 5$, $k_p = 8$, $k_v = 2$, $l = 20$, $k_1 = 8$, $k_2 = 15$. $\varepsilon(t)$ is a white noise with mean $0$ and variance $0.001$. The results are summarized in the following plots. In \figref{fig:traj} is showed the simulation environment along with the trajectory of the agent. Figure \figref{fig:Value} shows the evolution in time of $L_f \hat{h} = L_f \mathcal{E}_{h_s, S^k_j}$ and $\hat{h}^{(1)} = \mathcal{E}_{\hat{z}_2, S^k_j}$ along with $L_f h_s$. \figref{fig:Error} shows instead the estimation error of both of the regressors. We can clearly see from the plots that, after an initial transient, $\hat{h}^{(1)}$ converges to lower error levels with respect to $L_f \hat{h}$.


\section{Conclusions and future work}\label{SEC:CONCLUSIONS} 
The paper proposed a method  to obtain an analytical estimate of the derivatives of a function along the trajectories of a dynamical system by just measuring the  output and the state of the system. The motivating application scenario is the one in which an autonomous system, is required to navigate in an environment cluttered with obstacles, and high-order barrier functions must be employed to implement control strategies that allow the vehicle to avoid obstacles.
 The output and higher derivatives functions were assumed to be realizations of Gaussian processes and  a high gain observer driven by the measured output was used to construct the data set needed to train the Gaussian processes.  
 We  showed that the proposed method has theoretical guarantees of convergence to the estimate that would have been obtained if the measurements of the derivatives were available.

\par The simulations presented in this article show that in those applications where it is needed an analytic estimate of the derivative of a function of the state of a dynamical system, it is better to fit the Gaussian process with the data generated by an observer instead of taking the derivative of a Gaussian process fitted directly to the sample of the function. 
\par To bring our results to real scenarios there are still many issues to be addressed. For example, we need to reduce the sensitivity to noise of the observer, for instance by using the low-power high-gain observer proposed in \cite{astolfi_low-power_2018}. 
\par Future work will focus on how to leverage this data to improve the safe policy introduced in \cite{khan_gaussian_2022}. Another possible expansion is to study how to incorporate the data generated by the observer directly in a joint estimation of the function and its derivative, as for example done in \cite{raissi_machine_2017}. 

\begin{figure}[ht!]
    \centering
    \includegraphics[scale=0.5]{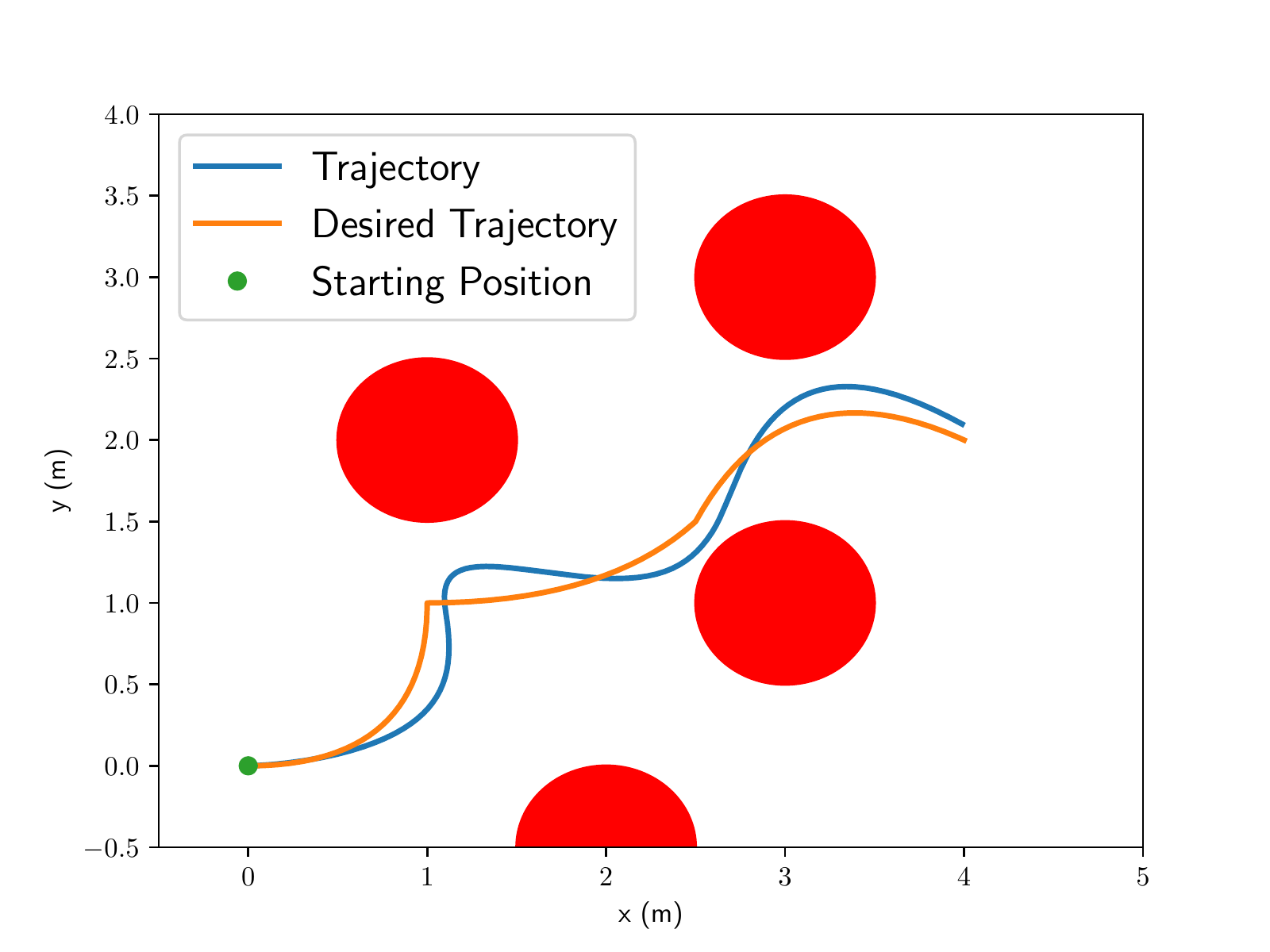}
    \caption{\label{fig:traj} In the figure we can the path followed by the agent in the environment, the desired trajectory was generated using a piece-wise polynomial curve. The red circle are the obstacle in the environment.}
\end{figure}

\begin{figure}[ht!]
    \centering
    \includegraphics[scale=0.5]{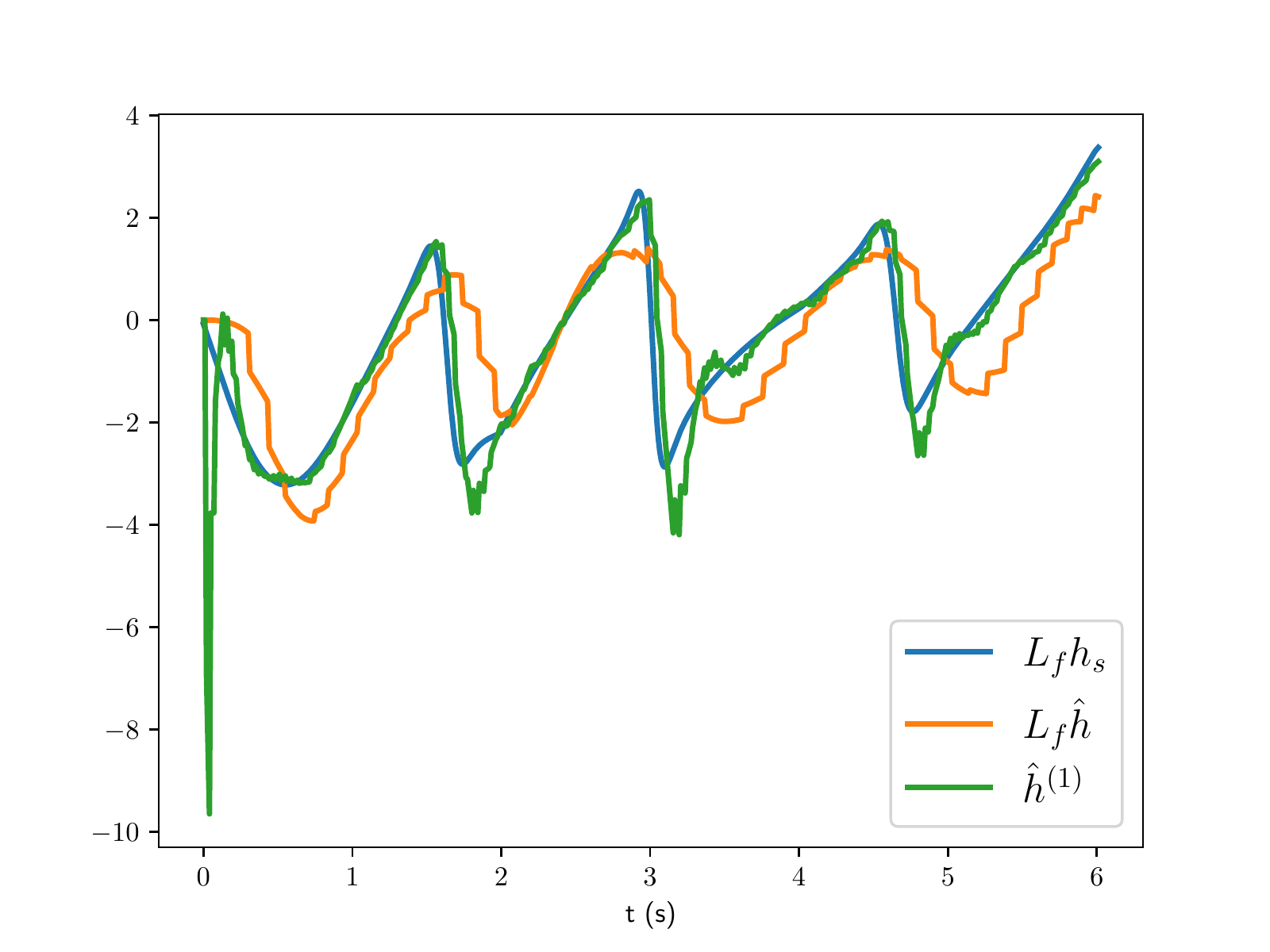}
    \caption{\label{fig:Value} In the figure we can see how $L_f \hat{h}$ and $\hat{h}^{(1)}$ evolve compared to $L_f h_s$. The discontinuities correspond to the sampling instants of the Gaussian processes.}
\end{figure}

\begin{figure}[ht!]
    \centering
    \includegraphics[scale=0.5]{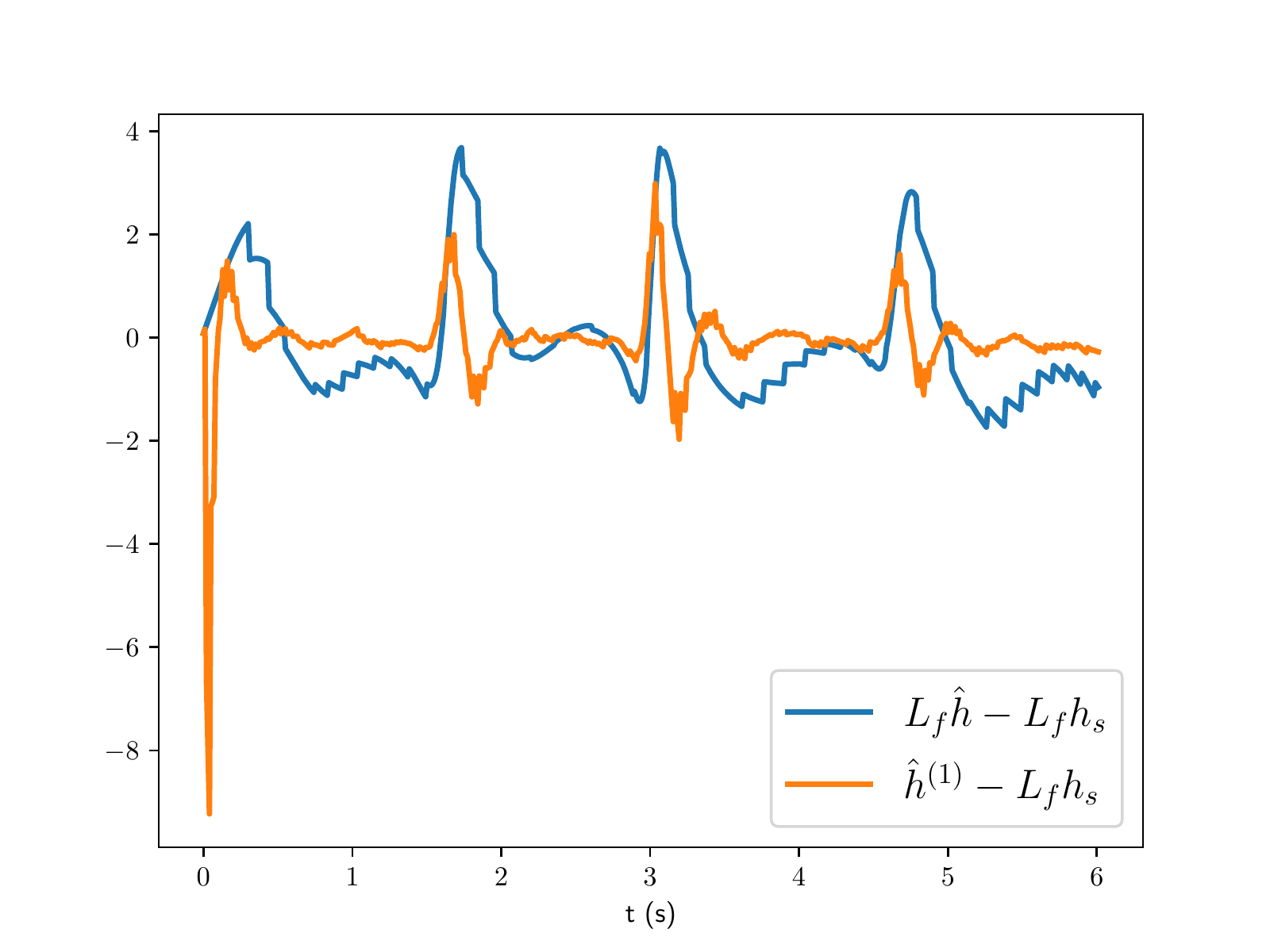}
    \caption{\label{fig:Error}In this figure are reported the estimation error of $L_f \hat{h}$ and $\hat{h}^{(1)}$. We can see that after a small transient, the error of $\hat{h}^{(1)}$ become smaller than $L_f \hat{h}$. The peaks in the plot correspond to a rapid change in the estimated function.}
\end{figure}

\bibliographystyle{unsrt}
\bibliography{ref}

\end{document}